\documentclass[10pt,twocolumn, nofootinbib]{revtex4-2}

\usepackage{assumptionsofphysics}
\usepackage{graphicx}
\usepackage{hyperref}
\hypersetup{
	colorlinks=true,
	citecolor=blue,
	urlcolor=blue,
	linkcolor=blue
}
\urlstyle{same}
\frenchspacing

\begin{document}

\title{Reverse Physics: From Laws to Physical Assumptions}
\author{Gabriele Carcassi, Christine A. Aidala}
\affiliation{Physics Department, University of Michigan, Ann Arbor, MI 48109}

\date{\today}

\begin{abstract}
To answer foundational questions in physics, physicists turn more and more to abstract advanced mathematics, even though its physical significance may not be immediately clear. What if we started to borrow ideas and approaches, with appropriate modifications, from the foundations of mathematics? In this paper we explore this route. In reverse mathematics\cite{friedman1976systems,simpson2017reverse,stillwellreverse} one starts from theorems and finds the minimum set of axioms required for their derivation. In reverse physics we want to start from laws or more specific results, and find the physical concepts and starting points that recover them. We want to understand what physical results are implied by which physical assumptions. As an example of the technique, we will see six different characterizations of classical mechanics, show that the uncertainty principle depends only on the entropy bound on pure states and recast the third law of thermodynamics in terms of the entropy of an empty system. We believe the approach can provide greater insights into both current and new physical theories, put the physical concepts at the forefront of the discussion and provide a more unified view of physics by highlighting common patterns and ideas across different physical theories.
\end{abstract}

\maketitle

\section{Introduction}

Within the foundations of mathematics, the reverse mathematics program\cite{friedman1976systems,simpson2017reverse,stillwellreverse} aims to start from known theorems and determine what axioms are required to prove them. This helps understand the relative logical strength of the theorems and get a better sense of their relationships. We propose a similar program for physics. We start from known physical theories or elements of physical theories, and we try to find premises that enable us to rederive them. The goal is to find premises that are physically more intuitive, so that ideally the arguments can be carried out with or without the math, and develop concepts that are of physical significance across theories.

While one does find efforts that analyze the premises of physical theories (see e.g. Refs.~\cite{chiribella2011informational, selby2021reconstructing, giles2016mathematical, boyling1972axiomatic, PhysRev.101.860, Haag1964848, axiomaticQFT1975, masanes2019measurement, carcassi2021four}), for example in the context of foundations of quantum mechanics, statistical mechanics or the search for quantum gravity, their aim is typically different. Some are focused on a particular field (e.g. quantum theory) or try to demonstrate the centrality of one aspect (e.g. information) with regard to others. Some concentrate more on mathematical premises or structure, making the physical significance less than clear. Many aim to develop new theories, instead of first gaining further insight within current ones. We propose a more generalist approach that examines the relationships of physical ideas across subfields, which is best explained through examples.

To give a sense of the breadth of the reverse physics approach, we show how it can be used to pursue three different objectives. First, we illustrate multiple instances of \emph{reformulation}. We study Hamilton's equations for a single degree of freedom to find six alternative characterizations that will connect different ideas from different fields under the same theme of determinism and reversibility. Next, we illustrate \emph{dependence analysis}. We show that the uncertainty principle is not a consequence of the full quantum theory, but simply of its zero entropy bound on pure states. If a similar bound is imposed in classical mechanics, similar uncertainty relationships are recovered. Lastly, we illustrate \emph{reconceptualization}. We take the third law of thermodynamics and rework it so that the role of crystalline substances to define zero entropy is instead played by the ``empty system''.

As we want the discussion to be focused on the physics, we will use the most widely known math among physicists and relegate mathematical derivations to the appendix. We also want to stress that this paper is not about the originality of each single result, but about how these are used to address a more general and higher goal: to uncover the core physical concepts and assumptions in the known physical theories and understand their generative power and their interrelationships.

\section{Determinism, reversibility and Hamiltonian mechanics}

Let us start our discussion with the equations of motion given by classical Hamiltonian mechanics:
\begin{align}\label{HamiltonEquations1}
	\frac{dq}{dt} = \frac{\partial H}{\partial p} \; \; \; \; \; \; \frac{dp}{dt} = - \frac{\partial H}{\partial q}
\end{align}
Our first task is to find a set of mathematical conditions that are equivalent to these equations.

Let us group the state variables $\xi^a = \{q, p\}$ and consider the displacement vector field $S = \frac{d\xi^a}{dt} = \{ \frac{dq}{dt}, \frac{dp}{dt} \} $ which tells the direction in phase space in which each state moves over time. We find that $S$ is divergenceless:
\begin{align}\label{divergenceless}
\frac{\partial S^q}{\partial q} + \frac{\partial S^p}{\partial p} =0
\end{align}
meaning that the flow through any closed region in phase space is zero. We also find the converse is true: a two dimensional divergenceless field $S$ always allows a stream function $H$ such that
\begin{align}
	S^q = \frac{\partial H}{\partial p} \; \; \; \; \; \; S^p = - \frac{\partial H}{\partial q}
\end{align}
In other words, Hamiltonian evolution for one degree of freedom is exactly the evolution for which the flow over a closed region in phase space is zero.

Let us now look at how an infinitesimal region of phase space evolves. Suppose $Q = q + \frac{dq}{dt} dt = q + S^q dt$ and $P = p + \frac{dp}{dt} dt = p + S^p dt$. The infinitesimal region of area $dq dp$ will become $dQ dP = |J| dq dp$ where:
\begin{align}
	|J| = \begin{vmatrix}
		\frac{\partial Q}{\partial q} & \frac{\partial Q}{\partial p} \\
		\frac{\partial P}{\partial q} & \frac{\partial P}{\partial  p} 
	\end{vmatrix} = \frac{\partial Q}{\partial q} \frac{\partial P}{\partial  p} - \frac{\partial P}{\partial q} \frac{\partial Q}{\partial p}
\end{align}
Note how the Jacobian determinant, in this simple case, coincides with the Poisson bracket. With a simple substitution we find:
\begin{align}
	|J|\approx 1 + \left( \frac{\partial S^q}{\partial q} + \frac{\partial S^p}{\partial p} \right)dt.
\end{align}
Note that the first order term is exactly the divergence, therefore condition \eqref{divergenceless} is equivalent to
\begin{align}\label{UnitaryJacobian}
	|J| = 1
\end{align}
which means the initial area $dqdp$ is equal to the final area $dQdP$.

So we have found that, for a two dimensional manifold, Hamiltonian evolution \eqref{HamiltonEquations1}, zero flow over closed regions \eqref{divergenceless} and preservation of area \eqref{UnitaryJacobian} are mathematically the same condition.\footnote{This is essentially a short proof for Liouville's theorem that can work in both directions.} However, we still have to answer the main question: what does this mean physically?

To do that, we look at statistical mechanics. There we use the area of a phase space region to count the number of states. Area preservation, then, means that the evolution preserves the state count: we start and we end with the same number of states. We have a one-to-one map between past and future states. Given an initial state we have one and only one final state. Physically, this means
\begin{align}\label{DeterminismReversibility}
	\text{Deterministic and reversible evolution}
\end{align}
is equivalent to conditions \eqref{HamiltonEquations1}, \eqref{divergenceless} and \eqref{UnitaryJacobian}. But determinism and reversibility is a physical condition, a physical property of the evolution: we have a physical characterization of Hamiltonian evolution.

This seems to tell us that the physical requirement of deterministic and reversible evolution over the continuum mathematically is not simply a bijection. The reason is that, over a continuum, counting points is not enough. A segment a meter long has as many points as a segment a kilometer long, but they are not the same length. In the same way, all finite regions of phase space have infinitely many points, but physically they do not contain the same number of states. Mathematically, we need a measure to give a size to each region, and therefore deterministic and reversible evolution is a bijection that preserves the measure, that preserves the state count: Hamiltonian evolution.

Note that this makes sense from a statistical mechanics/thermodynamics perspective as well. In statistical mechanics, the entropy is the logarithm of the count of states, therefore conservation of number of states means conservation of entropy. This means a deterministic evolution that is thermodynamically reversible (conservation of entropy) is also reversible in the dynamical sense (conservation of state count). Therefore 
\begin{align}\label{DeterminismThermodynamicReversibility}
	\parbox{2.8in}{Deterministic and thermodynamically reversible evolution}
\end{align}
is yet another equivalent condition.

What about entropy in the information sense? If the evolution is deterministic and reversible, the amount of information needed to specify the initial state should be exactly the same as the amount of information needed to specify the final state. That is, if $I[\rho] = - \int \rho \log \rho dqdp$ is the Gibbs/Shannon entropy, we would expect the following condition:
\begin{align}\label{EntropyConservation}
	I[\rho(t_1)] = I[\rho(t_2)]
\end{align}
to be equivalent to the others. We find:
\begin{align}
	I[\rho(t + dt)] = I[\rho(t)] - \int \rho \log |J| dqdp.
\end{align}
Since the Jacobian determinant of a continuous transformation cannot be negative, condition \eqref{EntropyConservation} is indeed equivalent to \eqref{UnitaryJacobian} and therefore to all the others.

It may seem odd that conservation of entropy for a deterministic system gives us energy conservation: we typically associate energy conservation with system isolation in thermodynamics. Where is the connection? For a deterministic and reversible system, future and past states depend only on the state of the present system. Therefore they do not depend on the state of other systems or of the environment. This means that deterministic and reversible systems are necessarily isolated. This time, running the argument in the opposite direction presents a problem. One needs to first decide whether a non-deterministic isolated system makes sense. If one assumes that the source of non-determinism, of stochastic uncertainty, is always interaction with other systems, other degrees of freedom, then every isolated system is deterministic. In this case, we can run the argument in reverse: every isolated system is deterministic and reversible. Without that extra assumption, this reverse direction is not guaranteed.

Finally, let us go back to the idea that a bijection is not enough to characterize deterministic and reversible evolution. Preservation of a measure is a mathematical idea. Is there a more physical way to look at it? The issue is that scientific measurements over a continuum can only carry finite precision. When considering how the information goes back and forth in time, then, we also need to take into account how the finite precision is mapped. While it may be true that the evolution of a damped harmonic oscillator is a bijection, the points get denser and denser around the equilibrium. Once we are close to equilibrium it becomes impossible to tell when the oscillator was started: the damped harmonic oscillator is not reversible in any practical sense if precision is taken into consideration. Let us see if this argument can be run formally. We want a coordinate invariant quantity that characterizes uncertainty. The obvious choice is the determinant of the covariance matrix:
\begin{align}\label{CovarianceMatrix}
	|\Sigma| = \begin{vmatrix}
		\sigma_q^2 & cov_{q,p} \\
		cov_{p,q} & \sigma_p^2
	\end{vmatrix} = \sigma_q^2 \sigma_p^2 - cov_{q,p}^2
\end{align}
We can therefore imagine the following condition
\begin{align}\label{BijectionUncertainty}
	|\Sigma(t_1)| = |\Sigma(t_2)| 
\end{align}
for which the uncertainty remains constant in time. Is this yet another equivalent condition? If we assume the spread of the distribution is small enough, we find
\begin{align}
	|\Sigma(t + dt)| = |J| |\Sigma(t)| |J|.
\end{align}
therefore \eqref{BijectionUncertainty} is indeed equivalent to \eqref{UnitaryJacobian} and to all others.

We could go on to find other relationships and extend the ones we found to multiple degrees of freedom, but we believe this should be enough to highlight the power of the reverse physics approach. The first thing to note is how in a couple of pages we have found fundamental connections between classical Hamiltonian mechanics \eqref{HamiltonEquations1}, vector calculus \eqref{divergenceless}, differential geometry \eqref{UnitaryJacobian}, statistical mechanics \eqref{DeterminismReversibility}, thermodynamics \eqref{DeterminismThermodynamicReversibility}, information theory \eqref{EntropyConservation} and plain statistics \eqref{BijectionUncertainty}. This helps foster a sense of unity of these disparate disciplines and their perspectives, a sense that is sorely lacking both in research and education: nature is one, and does not care about how we have divided academic knowledge. We believe that a single unified view of this kind will bring more coherence and clarity to physics than, for example, a unified theory for the fundamental forces.

Another advantage is that we were able to carry out many of the arguments conceptually. Mathematics is then used to better articulate the physical arguments. This should please those who believe that physics, not mathematics, should be more at the center of the discussion.\cite{hossenfelder2018lost, woit2006not}

The other interesting aspect is how much we were able to find in theories that are generally considered well understood. While it is common for some people to be aware of some of these results, we have found that most results are unknown to most people.

\section{Uncertainty principle revisited}

Let us now turn our attention to quantum mechanics, and see if our approach can shed new light on a theory that is not generally considered to be well understood. Since we talked about the role of uncertainty in classical mechanics, let us concentrate on the uncertainty principle, which states that every state has to satisfy the relationship
\begin{align}\label{UncertaintyPrinciple}
	\sigma_q \sigma_p \geq \frac{\hbar}{2}.
\end{align}

We can start with an interesting observation. If we look back at \eqref{CovarianceMatrix}, we have:
\begin{align}
	\sigma_q^2 \sigma_p^2 = |\Sigma| + cov_{q,p}^2 \geq |\Sigma|.
\end{align}
Since by \eqref{BijectionUncertainty} $|\Sigma|$ is a constant of motion, we find that during Hamiltonian evolution the uncertainty is bounded. The lowest uncertainty is reached when there are no correlations and the covariance is zero. Though it is a conceptually different relation, since $|\Sigma|$ is just a constant of motion, it gives us the following intuition: a coordinate invariant cap to the uncertainty will produce an inequality on the product of variances. Where can we find such a cap in quantum mechanics?

Let us turn our attention to entropy, represented by the von Neumann entropy $I[\rho] = - tr (\rho \log \rho)$. For every pure state $|\psi \rangle$ we have the following property:
\begin{align}\label{ZeroQuantumEntropy}
	\begin{split}
		I[|\psi\rangle\langle\psi|] = 0.
	\end{split}
\end{align}
In other words, all pure states have zero entropy, the entropy is capped. Since entropy and uncertainty are related, is fixing the amount of entropy enough to recover the uncertainty principle?

We can test that hypothesis by studying the space of classical distributions with a fixed value of entropy $I_0$. We find that they have to satisfy the relationship\footnote{The careful reader will note the units do not quite work. The issue here is the $\log \rho$ in the Shannon/Gibbs entropy, as $\rho$ is not a pure number. Introducing a dimensionful constant $\log \rho \hat{h}$ would fix the expression, which would then fix the uncertainty relationship as well. }
\begin{align}\label{ClassicalUncertaintyPrinciple}
	\sigma_q \sigma_p \geq \frac{e^{I_0}}{2 \pi e}.
\end{align}
The equality holds if the distribution is the product of two independent Gaussians.

The following condition:
\begin{align}\label{BoundedEntropy}
	\text{The entropy of the system is fixed}
\end{align}
is therefore enough to recover an uncertainty relationship independently of the theory. This can be understood as fixing the amount of information carried by the state. The reverse argument does not work, so condition \eqref{BoundedEntropy} is a stronger condition than the existence of an uncertainty relationship.

This example shows another advantage of reverse physics: it allows us to pinpoint what part of a theory is responsible for which effect. In this case, we have found that it is not quantum mechanics per se that leads to uncertainty relationships, just the entropy cap on pure states. If we implemented a similar cap in classical mechanics, we would obtain the same result. Therefore condition \eqref{ZeroQuantumEntropy} is a more fundamental and clear way to characterize the uncertainty in quantum mechanics, making it evident that all pure states carry the same information, they describe the system at the same level of precision.

\section{A new understanding of the third law}

In the previous section we saw the lower entropy bound of zero entropy is built into quantum mechanics and not in classical mechanics. However, the bound is also built into thermodynamics through the third law\footnote{This formulation is given by Ref. \cite{lewis1923thermodynamics}}
\begin{align}\label{ThirdLaw}
	\parbox{2.8in}{Every substance has a finite positive entropy, but
		at the absolute zero of temperature the entropy may become zero, and does so
		become in the case of perfect crystalline substances.}
\end{align}
Therefore we have two fundamental physical theories that share this trait. Can we find a more general characterization for this?

The first step is to look for a system that can conceptually function as a zero for entropy that feels a bit less arbitrary than a perfect crystalline substance at zero temperature. Let us recall that a fundamental property for thermodynamic entropy is that it is an extensive quantity, it is additive for independent systems:
\begin{align}
	I_{AB} = I_{A} + I_{B}.
\end{align}
Is there a system that acts as a zero for system composition? This would be the empty system $\emptyset$: any system combined with the empty system $\emptyset$ will remain unchanged.\footnote{If we consider systems as a monoid under composition, the empty system is the identity element, much like the number zero, the empty set or the identity map in their respective structures.} In particular, we must have:
\begin{align}\label{EntropyEmptySystem}
	I_{\emptyset} = I_{\emptyset\emptyset} = I_{\emptyset} + I_{\emptyset} = 0.
\end{align}
We can therefore use the empty system as our zero reference for entropy.

This move is a conceptual shift. In thermodynamics, the idea that the entropy for crystalline substances is zero is a phenomenological assumption: it is so because it justifies the behavior of substances as they approach absolute zero. We need statistical mechanics to explain it. Condition $\eqref{EntropyEmptySystem}$, however, is not a phenomenological consideration but a conceptual necessity: it must be so. In this sense, the new zero entropy system is more fundamental. But does it relate to the old ideas? In terms of statistical mechanics, the crystalline substance has zero entropy because it has one possible way to be. The empty system also has one possible configuration. So the old ideas indeed carry over: a crystalline substance will have to have the same entropy as the empty system.

Now that we have a better concept for zero entropy, we need to understand why we can't have states with lower entropy. Let us consider it from an information theoretical perspective. Suppose we had a state with negative entropy. This would mean that it is better specified than a state with zero entropy. The state is better specified than saying that the system is not there. But stating that the system is not there already completely specifies the system. Therefore the condition that imposes a lower bound of zero for entropy is the following:
\begin{align}\label{NewThirdLaw}
	\parbox{2.8in}{No state can describe a system more accurately than stating the system is not there in the first place.}
\end{align}
Again, note that this condition is not phenomenological: it is a logical necessity. This condition is realized in quantum mechanics: pure states represent the most precise descriptions of the system and they have zero entropy much like the vacuum, another pure state. This condition is realized in thermodynamics through the third law. In fact, the third law could be rederived as follows. Assuming \eqref{NewThirdLaw} tells us that entropy is bounded at zero. Since the entropy in thermodynamics is a convex function of energy, entropy will reach its minimum at zero temperature, which is the only temperature that can reach zero entropy. We then use the previous result that the entropy of a crystalline substance at zero temperature is the same as for an empty system.

This showcases the final advantage of reverse physics. We want to differentiate between ``assumptions'' (conditions that are not necessarily true) and ``principles'' (those that have to be taken to be true); we want to differentiate between conditions that are ``phenomenological'' (empirically derived or justified) from those that are ``conceptual'' (logical necessities or consequences of definitions).\footnote{Philosophers may call these ``constitutive'' conditions.} If we are able to elevate assumptions to principles, or replace phenomenological assumptions with conceptual ones, then the premises for our theories are sturdier and we have improved their foundations.

\section{Conclusion}

We have shown how the reverse physics approach, trying to find different premises from which to rederive physical theories or specific results, can help us improve our understanding. In the first example we showed how we can take a single theory (classical Hamiltonian mechanics) and characterize it in several ways consistent with each other (deterministic evolution that is either dynamically or thermodynamically reversible, conservation of information entropy, conservation of measurement precision, ...). We have seen how this helps foster a unified understanding of physics, and puts physics (and not mathematics) at the center of the discussion. We then showed how we can isolate a specific result of a theory (the uncertainty relationship), see what specific part of the theory is responsible for that result (fixing the entropy) and showed how another theory (classical mechanics) can be modified to achieve the same result. Lastly, we saw how we can recast a mostly phenomenological condition (the zero entropy of a crystalline structure at zero temperature) into a conceptual one (the zero entropy of an empty system).

We believe this type of work can be beneficial in several ways. At the very least, it can have a significant impact within physics education. Most importantly, it leads to common ideas and definitions that span different fields of physics, which may help develop common tools, much like set theory and category theory provide standard tools for all of mathematics. Lastly, as it helps us understand the realm of validity of the different theories and also the underlying mathematical tools, it may give insight into the development of new tools and new theories.

More than the particular results presented in this paper, it is the type of approach that we want to promote. We hope others will present their own reverse physics results, and that this may lead to a renewed interest in the manner in which physicists in the past tried to formulate the foundations of our field: through physical principles, laws and assumptions.

\section*{Acknowledgments}
We acknowledge funding from the MCubed program of the University of Michigan. This work is part of a larger project, Assumptions of Physics~\cite{aop-book}, which aims to identify a handful of physical principles from which the basic laws can be rigorously derived.

\section*{Appendix: Proofs and calculations}

We include here all proofs which we omitted from the main body since, as we stated, the physics should be given full attention. To keep the discussion light and accessible to the widest audience, we use the simplest mathematical techniques to get the result. Some readers may feel that some results are well known, or too simple to even be included. Due to the interdisciplinary nature of the subject, we believe it is useful to collect them all here.

\begin{prop}
	A dynamical system characterized by two variables $\{q, p\}$ follows Hamilton's equations if and only if the displacement field $S=\{\frac{dq}{dt}, \frac{dp}{dt} \}$ is divergenceless.
\end{prop}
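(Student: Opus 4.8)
The plan is to prove the two directions separately, treating the forward implication as a direct computation and the converse as an existence result for a stream function. For the forward direction I would assume Hamilton's equations hold, so that $S^q = \partial H/\partial p$ and $S^p = -\partial H/\partial q$ for some twice-differentiable $H$. Substituting directly into the divergence gives
\begin{align}
	\frac{\partial S^q}{\partial q} + \frac{\partial S^p}{\partial p} = \frac{\partial^2 H}{\partial q \, \partial p} - \frac{\partial^2 H}{\partial p \, \partial q},
\end{align}
which vanishes by the equality of mixed partial derivatives (Schwarz's theorem), provided $H$ is $C^2$. This direction is immediate and its only hypothesis is sufficient smoothness of the Hamiltonian.

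The converse is the substantive part: given a divergenceless field $S$, I must manufacture a function $H$ whose partials reproduce $S$. The key observation is that the divergence-free condition is precisely the integrability condition for the existence of such a stream function. Concretely, I would introduce the one-form $\omega = S^q \, dp - S^p \, dq$ and compute its exterior derivative
\begin{align}
	d\omega = \left( \frac{\partial S^q}{\partial q} + \frac{\partial S^p}{\partial p} \right) dq \wedge dp,
\end{align}
so that $\omega$ is closed exactly when $S$ is divergenceless. By the Poincar\'e lemma a closed one-form is locally exact, yielding an $H$ with $dH = \omega$, that is $\partial H/\partial p = S^q$ and $\partial H/\partial q = -S^p$, which are Hamilton's equations. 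Equivalently, for those preferring an explicit construction, I would define $H$ by a line integral of $\omega$ from a fixed basepoint and verify the two partial-derivative conditions by differentiating under the integral sign, using the divergence-free condition to establish path-independence.

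The main obstacle I anticipate is topological rather than computational. The Poincar\'e lemma guarantees a stream function only locally, or globally on a simply connected phase space; on a domain with nontrivial first cohomology the one-form $\omega$ can be closed without being exact, so $H$ could become multivalued or fail to exist globally. For the single-degree-of-freedom setting of the proposition, where the phase space is taken to be the full, simply connected plane, this difficulty does not arise and the local result globalizes. I would therefore state the smoothness and simple-connectedness assumptions explicitly, and note that the equivalence is an instance of the standard correspondence between divergence-free planar fields and their stream functions.
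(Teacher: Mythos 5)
Your proof is correct, and its skeleton matches the paper's exactly: the forward direction by equality of mixed partials, the converse by producing a stream function. The difference is one of depth rather than route. The paper's converse is a single sentence that simply invokes the standard fact that a divergenceless planar field admits a stream function; you actually prove that fact, by packaging $S$ into the one-form $\omega = S^q\,dp - S^p\,dq$, observing that $d\omega = \bigl(\operatorname{div} S\bigr)\, dq \wedge dp$, and applying the Poincar\'e lemma (or, equivalently, the explicit line-integral construction). Your treatment also surfaces a hypothesis the paper never states: the Poincar\'e lemma is local, so global existence of $H$ requires a simply connected phase space, and on a domain with nontrivial first cohomology $H$ can be multivalued. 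The paper's only caveat in this direction is a remark about $H$ needing to be twice differentiable; it is silent on topology. So what your version buys is a self-contained argument with its domain of validity made explicit, at the cost of heavier machinery (exterior calculus) than the paper's deliberately elementary style; what the paper's version buys is brevity, by delegating the integrability statement to a known result.
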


\begin{proof}
	Suppose the evolution is Hamiltonian. Then $S=\{\frac{\partial H}{\partial p}, - \frac{\partial H}{\partial q} \}$. The divergence is $div(S) = \frac{\partial S^q}{\partial q} + \frac{\partial S^p}{\partial p} = \frac{\partial^2 H}{\partial q\partial p} - \frac{\partial^2H}{\partial p\partial q} = 0$.
	
	Conversely, suppose $S$ is divergenceless. Then, it admits a stream function $H$ such that $\{\frac{dq}{dt}, \frac{dp}{dt}\} = \{\frac{\partial H}{\partial p}, - \frac{\partial H}{\partial q} \}$, which are Hamilton's equations.
\end{proof}

\begin{prop}
	The displacement field $S=\{\frac{dq}{dt}, \frac{dp}{dt} \}$ of a dynamical system characterized by two variables $\{q, p\}$ is divergenceless if and only if the Jacobian determinant of the infinitesimal time evolution is equal to one.
\end{prop}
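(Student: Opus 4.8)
The plan is to compute the Jacobian determinant of the infinitesimal map $\xi^a \mapsto \xi^a + S^a dt$ explicitly, expand it to first order in $dt$, and read off that its linear coefficient is precisely the divergence of $S$. First I would write the evolved coordinates as $Q = q + S^q dt$ and $P = p + S^p dt$, then differentiate to obtain the four entries of the Jacobian matrix: $\frac{\partial Q}{\partial q} = 1 + \frac{\partial S^q}{\partial q}dt$, $\frac{\partial Q}{\partial p} = \frac{\partial S^q}{\partial p}dt$, $\frac{\partial P}{\partial q} = \frac{\partial S^p}{\partial q}dt$, and $\frac{\partial P}{\partial p} = 1 + \frac{\partial S^p}{\partial p}dt$.

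Next I would form the determinant $|J| = \frac{\partial Q}{\partial q}\frac{\partial P}{\partial p} - \frac{\partial P}{\partial q}\frac{\partial Q}{\partial p}$ and multiply out. The off-diagonal product and the product of the two diagonal corrections are both of order $dt^2$, so in the infinitesimal limit only the constant and linear contributions survive, yielding $|J| = 1 + \left( \frac{\partial S^q}{\partial q} + \frac{\partial S^p}{\partial p} \right)dt$, exactly the expansion already recorded in the main text.

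The equivalence then follows immediately. Since the zeroth-order term is pinned at $1$, the condition $|J| = 1$ holds for the infinitesimal evolution if and only if the coefficient of $dt$ vanishes, that is $\frac{\partial S^q}{\partial q} + \frac{\partial S^p}{\partial p} = 0$, which is precisely the statement that $S$ is divergenceless. I expect the only genuine subtlety to be conceptual rather than computational: one must be explicit that ``$|J| = 1$'' is to be understood at first order in $dt$, so that the $O(dt^2)$ remainder can be legitimately discarded, and that the nonnegativity of the Jacobian of a continuous evolution ensures we are comparing against $+1$ rather than a sign-ambiguous value. The cleanest way to make this precise is to phrase everything as a statement about the time derivative at the initial instant, $\frac{d}{dt}|J| = div(S)$, so that $|J|$ remains equal to $1$ to first order exactly when $div(S) = 0$.
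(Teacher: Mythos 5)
Your proposal is correct and follows essentially the same route as the paper's own proof: compute the entries of the Jacobian of the map $(q,p) \mapsto (q + S^q\,dt,\; p + S^p\,dt)$, expand the determinant to obtain $|J| = 1 + \left(\frac{\partial S^q}{\partial q} + \frac{\partial S^p}{\partial p}\right)dt + O(dt^2)$, and conclude that $|J|=1$ at first order if and only if $\operatorname{div}(S)=0$. Your added care about interpreting the condition at first order (equivalently as $\frac{d}{dt}|J| = \operatorname{div}(S)$ at the initial instant) is a reasonable sharpening but does not change the argument.
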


\begin{proof}
	Let $Q = q + \frac{dq}{dt} dt = q + S^q dt$ and $P = p + \frac{dp}{dt} dt = p + S^p dt$. The Jacobian determinant is given by:
	\begin{align*}
		|J| &= \begin{vmatrix}
			\frac{\partial Q}{\partial q} & \frac{\partial Q}{\partial p} \\
			\frac{\partial P}{\partial q} & \frac{\partial P}{\partial  p} 
		\end{vmatrix} = \frac{\partial Q}{\partial q} \frac{\partial P}{\partial  p} - \frac{\partial P}{\partial q} \frac{\partial Q}{\partial p} \\
		&= \left(1 + \frac{\partial S^q}{\partial  q}dt\right)\left(1 + \frac{\partial S^p}{\partial  p}dt\right) - \left(\frac{\partial S^p}{\partial  q}dt\right)\left(\frac{\partial S^q}{\partial  p}dt\right) \\
		&= 1 + \left(\frac{\partial S^q}{\partial  q} + \frac{\partial S^p}{\partial  p}\right)dt + O(dt^2)
	\end{align*}
	The Jacobian determinant will be 1 if and only if the first order term is zero. Since the first order term is the divergence of $S$, this proves the proposition.	
\end{proof}

\begin{remark}
	An unstated assumption here is that $H$ is twice differentiable. The Jacobian and the divergence, in fact, would not be well defined if $H$ were differentiable only once. In that case, the acceleration $a = \frac{d}{dt}\frac{dq}{dt} = \frac{d}{dt} \frac{\partial H}{\partial p}$ would also be ill defined. With this in mind, if a Hamiltonian is not twice differentiable at a point it is better to consider the system not Hamiltonian at that point.
\end{remark}

\begin{prop}
	The information entropy is conserved during a continuous transformation if and only if the Jacobian is unitary.
\end{prop}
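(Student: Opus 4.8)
The plan is to reduce the claim to the transformation law of the Gibbs/Shannon entropy under a change of phase-space coordinates, and then read off the equivalence from the value of $\log|J|$. The identity that carries all the content is how the density and its entropy transform under the map $(q,p)\mapsto(Q,P)$ induced by the continuous evolution, so I would derive that first and treat the two directions of the ``if and only if'' as corollaries.

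First I would fix the transformation and record how the density transforms. Conservation of probability under an invertible map forces $\rho'(Q,P)\,dQ\,dP = \rho(q,p)\,dq\,dp$, i.e. $\rho'(Q,P)=\rho(q,p)/|J|$, where $|J|=\partial(Q,P)/\partial(q,p)$. Because the map arises from a continuous evolution it is invertible and orientation-preserving, so $|J|>0$ everywhere and $\log|J|$ is a well-defined real function. This positivity is precisely the point I will need in the last step, and it is the role played by the remark that the Jacobian determinant of a continuous transformation cannot be negative.

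Next I would compute $I[\rho']=-\int \rho'\log\rho'\,dQ\,dP$ by changing variables back to $(q,p)$, using $dQ\,dP=|J|\,dq\,dp$ together with $\rho'=\rho/|J|$. The two factors of $|J|$ cancel inside the integral except for the term coming from $\log(1/|J|)$, giving the clean identity
\[ I[\rho'] = I[\rho] + \int \rho\,\log|J|\,dq\,dp, \]
which is exactly the infinitesimal formula quoted in the main text, up to the sign convention for $|J|$ versus its inverse. Everything else follows from this.

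From here one direction is immediate and the other is delicate, which is where I expect the genuine obstacle. If $|J|=1$ identically then $\log|J|\equiv 0$, the correction term vanishes for every $\rho$, and entropy is conserved: this is the easy ``if'' direction. The converse is subtle, because knowing only that $\int\rho\log|J|\,dq\,dp=0$ for a single fixed $\rho$ does \emph{not} force $|J|=1$ pointwise: $\log|J|$ could be positive on one region and negative on another with the weighted contributions cancelling. To close the argument I would read ``entropy is conserved'' as a property of the evolution that must hold for every admissible distribution, and then probe each point $x_0$ by choosing $\rho$ concentrated near $x_0$; the vanishing integral collapses to $\log|J(x_0)|=0$, hence $|J(x_0)|=1$, and since $x_0$ is arbitrary, $|J|\equiv 1$. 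The positivity $|J|>0$ is what keeps $\log|J|$ finite so this localization is legitimate. The main obstacle is therefore not any calculation but making the quantifier precise --- clarifying that the condition is conservation for all states, equivalently that $\int\rho\,\mathrm{div}(S)\,dq\,dp=0$ for all $\rho$ forces $\mathrm{div}(S)=0$ --- which is what upgrades the scalar identity to the pointwise statement $|J|=1$.
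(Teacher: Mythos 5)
Your proposal follows essentially the same route as the paper's proof: derive the transformation law $I[\rho'] = I[\rho] \pm \int \rho \log|J|\,dq\,dp$ from the density transformation rule, then read off the equivalence, using continuity to exclude $|J|=-1$. Your explicit localization argument for the converse --- probing each point with a concentrated $\rho$ to upgrade the vanishing integral to the pointwise statement $|J|\equiv 1$ --- is exactly the step the paper compresses into the phrase ``conserved for every $\rho$ if and only if $|J|$ is 1 everywhere,'' so if anything your treatment is more complete on the one delicate point.
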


\begin{proof}
	Let $I[\rho(x^i)] = - \int \rho(x^i) \log \rho(x^i) \, dx^n$ be the Shannon entropy of the distribution $\rho(x^i)$. Let $y^j=y^j(x^i)$ be a differentiable transformation. Since $\rho(x^i)$ is a density, we have $\rho(y^j) = \rho(x^i) |J|$. The information entropy after the transformation will be
	\begin{align*}
		I[\rho(y^j)] &= - \int \rho(y^j) \log \rho(y^j) \, dy^n \\
		&= - \int \rho(x^i) |J| \log \left( \rho(x^i) |J| \right) \, dy^n \\
		&= - \int \rho(x^i) \log \left( \rho(x^i) |J| \right) \, dx^n \\
		&= - \int \rho(x^i) \log \rho(x^i) \, dx^n - \int \rho(x^i) \log |J| \, dx^n \\
		&= I[\rho(x^i)] - \int \rho(x^i) \log |J| \, dx^n.
	\end{align*}
	The information entropy is conserved for every $\rho$ if and only if $|J|$ is 1 everywhere. Therefore the entropy is conserved if and only if the Jacobian is unitary.
\end{proof}

\begin{prop}
	A continuous transformation will conserve the uncertainty of a distribution with small support if and only if the Jacobian is unitary. The uncertainty is characterized by the determinant of the covariance matrix and small means the evolution can be considered approximately linear over the support of the distribution.
\end{prop}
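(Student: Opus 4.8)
The plan is to linearize the transformation over the small support of the distribution and track how the covariance matrix transforms. First I would write the transformation as $y^j = y^j(x^i)$ and expand it to first order about the mean $\bar{x}^i$ of the distribution, giving $y^k \approx y^k(\bar{x}) + J^k_i (x^i - \bar{x}^i)$, where $J^k_i = \partial y^k / \partial x^i$ is the Jacobian matrix evaluated at the mean. Because the support is small, this linear approximation is accurate over the whole region where $\rho$ is nonnegligible, which is precisely the meaning of ``small'' stated in the proposition.

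Next I would compute how the mean and covariance transform under this affine map. The transformed mean is $\bar{y}^k \approx y^k(\bar{x})$, since the first-order term averages to zero. The centered variables then satisfy $y^k - \bar{y}^k \approx J^k_i (x^i - \bar{x}^i)$, and substituting into the definition of the covariance yields $\Sigma_y^{kl} = J^k_i J^l_j \Sigma_x^{ij}$, or in matrix form $\Sigma_y = J \Sigma_x J^T$. This is the key structural step: the covariance transforms by conjugation with the Jacobian matrix.

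Taking determinants and using $\det(J^T) = \det(J)$ gives $|\Sigma_y| = |J|^2 |\Sigma_x|$, which matches the infinitesimal relation $|\Sigma(t+dt)| = |J| |\Sigma(t)| |J|$ quoted in the main text. The uncertainty $|\Sigma|$ is therefore conserved if and only if $|J|^2 = 1$. To close the argument I would invoke the same observation used in the entropy proposition above: the Jacobian determinant of a continuous transformation cannot be negative, so $|J|^2 = 1$ is equivalent to $|J| = 1$, the unitary Jacobian condition, and this establishes the equivalence in both directions.

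The main obstacle is controlling the linearization error. The relation $\Sigma_y = J \Sigma_x J^T$ is exact only for affine maps; for a genuinely nonlinear transformation there are higher-order corrections involving the second derivatives of $y$ and the higher moments of $\rho$. The proposition sidesteps this by restricting to distributions with small support, but making the ``approximately linear'' clause precise would require bounding these correction terms relative to $|\Sigma_x|$, and that estimate is where the real care is needed.
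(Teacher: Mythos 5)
Your proposal is correct and follows essentially the same route as the paper's proof: linearize the transformation over the small support, observe that the covariance matrix transforms as $\Sigma \mapsto J \Sigma J^T$ so its determinant picks up a factor $|J|^2$, and then rule out $|J| = -1$ by noting that a continuous transformation cannot have negative Jacobian determinant. Your closing remark about bounding the linearization error is a fair caveat, but the paper handles it the same way you do --- by building the ``approximately linear'' assumption into the statement itself.
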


\begin{proof}
	As we assume the distribution to be small enough, we can linearize the transformation to $y^j = J^j_i x^i + B^j$ where $J$ is the Jacobian. Noting that the covariance is a linear operator that does not depend on the expectation of the variables, we have
	\begin{align*}
		|\Sigma^{ij}| &= |cov(y^i, y^j)| \\
		&= |cov(J^i_k x^k + B^i, J^j_l x^l + B^j)| \\
		&= |cov(J^i_k x^k , J^j_l x^l )| \\
		&= |J^i_k cov(x^k , x^l ) J^j_l| \\
		&= |J^i_k | | \Sigma^{kl} | | J^j_l| \\
	\end{align*}
	The uncertainty is conserved for all small distributions if and only if $|J|$ is $\pm 1$ everywhere. Given that the transformation is continuous, the Jacobian determinant must be positive (i.e. a continuous transformation cannot be a reflection) and therefore the entropy is conserved if and only if the Jacobian is unitary.
\end{proof}

\begin{prop}
	The von Neumann entropy of a pure state is zero.
\end{prop}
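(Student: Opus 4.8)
The plan is to diagonalize the density operator and read off the entropy from its eigenvalues. A pure state corresponds to $\rho = |\psi\rangle\langle\psi|$ with $\langle\psi|\psi\rangle = 1$, and the key structural fact is that such a $\rho$ is a rank-one orthogonal projector, hence idempotent: $\rho^2 = |\psi\rangle\langle\psi|\psi\rangle\langle\psi| = |\psi\rangle\langle\psi| = \rho$.

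First I would use idempotency to pin down the spectrum. Any eigenvalue $\lambda$ of $\rho$ must satisfy $\lambda^2 = \lambda$, so $\lambda \in \{0, 1\}$. Combined with the normalization $\mathrm{tr}(\rho) = \sum_i \lambda_i = 1$, this forces exactly one eigenvalue to equal $1$ and all the others to equal $0$; geometrically, $|\psi\rangle$ spans the one-dimensional range of the projector.

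Next I would evaluate the entropy in the eigenbasis. Writing $I[\rho] = -\mathrm{tr}(\rho \log \rho) = -\sum_i \lambda_i \log \lambda_i$ and substituting the spectrum gives $-(1 \cdot \log 1) - \sum_{\lambda_i = 0}(0 \cdot \log 0)$. Since $\log 1 = 0$, the first term vanishes immediately.

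The only real subtlety, and the one point worth flagging, is the treatment of the $0 \log 0$ contributions. These are indeterminate as literally written, but the standard information-theoretic convention sets them to zero, justified by the limit $\lim_{x \to 0^+} x \log x = 0$. With this convention every term vanishes and $I[\rho] = 0$, as claimed. The proof is essentially immediate once idempotency is invoked; the argument is the quantum analogue of noting that a deterministic classical distribution (all probability on a single outcome) also carries zero Shannon entropy.
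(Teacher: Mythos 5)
Your proof is correct, but it takes a different route from the paper's. You diagonalize: idempotency plus $\mathrm{tr}(\rho)=1$ forces the spectrum $\{1,0,0,\dots\}$, and then the entropy vanishes term by term under the convention $0\log 0 = 0$. The paper instead uses idempotency purely algebraically, without ever looking at eigenvalues: substituting $\rho = \rho^2$ inside the logarithm gives $I[\rho] = -\mathrm{tr}(\rho\log\rho^2) = -2\,\mathrm{tr}(\rho\log\rho) = 2I[\rho]$, hence $I[\rho]=0$. The paper's trick is shorter and avoids spectral decomposition entirely, but it quietly relies on two things your version makes explicit: the same $0\log 0$ convention (needed for $\rho\log\rho$ to be defined on the kernel of $\rho$), and the finiteness of $I[\rho]$, without which the cancellation in $I = 2I \Rightarrow I = 0$ is not valid. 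Your spectral argument makes both points manifest, and it also yields the structural insight—a pure state is the quantum analogue of a deterministic classical distribution—that explains \emph{why} the entropy vanishes rather than merely verifying that it does. The trade-off is that the paper's argument generalizes more readily as a one-line identity, while yours carries more information per step.
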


\begin{proof}
	Let $\rho$ be the density matrix of a pure state. We have $\rho = |\psi \rangle \langle \psi | = |\psi \rangle \langle \psi |\psi \rangle \langle \psi | = \rho^2$. The von Neumann entropy is $I[\rho] = - tr (\rho \log \rho) = - tr (\rho \log \rho^2) = - 2 tr (\rho \log \rho) = 2 I[\rho]$. Therefore $I[\rho] = 0$.
\end{proof}

\begin{prop}
	Let $\rho(q,p)$ be a normalized density distribution over the 2-dimensional manifold charted by $(q,p)$. Furthermore, let $I_0$ be the value of the Shannon/Gibbs entropy $I[\rho]$. Then
	$$ 	\sigma_q \sigma_p \geq \frac{e^{I_0}}{2 \pi e}.$$
	Furthermore, the equal sign applies in the case where $\rho$ is the product of two gaussians.
\end{prop}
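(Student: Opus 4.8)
The plan is to use the information-theoretic fact that, among all distributions sharing a prescribed covariance matrix, the Gaussian uniquely maximizes the differential entropy, and then to combine this with the elementary bound $\sigma_q^2\sigma_p^2 \geq |\Sigma|$ already noted in the main text. First I would introduce the bivariate Gaussian $g$ with the same mean and covariance matrix $\Sigma$ as $\rho$, and compute its entropy directly. Since $-\log g$ is affine in the quadratic form $(x-\mu)^T\Sigma^{-1}(x-\mu)$, whose $\rho$-expectation equals $\mathrm{tr}(\Sigma^{-1}\Sigma)=2$, one finds
\[
  I[g] = \frac{1}{2}\log\!\left((2\pi e)^2 |\Sigma|\right).
\]

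Next I would establish the maximum-entropy inequality through the non-negativity of the relative entropy $D(\rho\|g)=\int \rho\log(\rho/g)\,dqdp \geq 0$. Expanding gives $D(\rho\|g) = -I[\rho]-\int\rho\log g\,dqdp$, and the key step is that $\log g$ is quadratic in $(q,p)$, so $\int\rho\log g\,dqdp$ depends only on the first and second moments of $\rho$. Because $\rho$ and $g$ share these moments by construction, $\int\rho\log g = \int g\log g = -I[g]$, whence $I[\rho]\leq I[g]$, i.e. $I_0 \leq \frac{1}{2}\log\!\left((2\pi e)^2 |\Sigma|\right)$. Rearranging yields $|\Sigma|\geq e^{2I_0}/(2\pi e)^2$, and combining with the observation from the text that $\sigma_q^2\sigma_p^2 = |\Sigma|+cov_{q,p}^2$ gives
\[
  \sigma_q^2\sigma_p^2 = |\Sigma|+cov_{q,p}^2 \geq |\Sigma| \geq \frac{e^{2I_0}}{(2\pi e)^2},
\]
so that taking square roots delivers the claimed bound.

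For the equality case, both inequalities must saturate simultaneously. Saturating $\sigma_q^2\sigma_p^2 \geq |\Sigma|$ requires $cov_{q,p}=0$, while saturating $D(\rho\|g)\geq 0$ forces $\rho = g$ almost everywhere. A Gaussian with vanishing covariance factorizes into a product of two independent one-dimensional Gaussians, which is precisely the stated condition for equality.

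The step I expect to require the most care is the maximum-entropy argument, specifically the replacement of $\int\rho\log g$ by $\int g\log g$. It hinges entirely on $\log g$ being a quadratic polynomial, together with the finiteness of the second moments of $\rho$ needed for convergence, and it is exactly this feature that singles out the Gaussian as the extremal distribution. Everything else reduces to the bookkeeping of the Gaussian normalization and the positivity arguments already used in the text.
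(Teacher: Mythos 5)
Your proof is correct, but it takes a genuinely different route from the paper's. The paper sets up a constrained variational problem with Lagrange multipliers: it minimizes $\sigma_q^2\sigma_p^2$ subject to normalization and fixed entropy $I_0$, finds that the stationary point is the uncorrelated product Gaussian, evaluates the multipliers, and reads off $I_0 = \ln(2\pi e\,\sigma_q\sigma_p)$. You instead invoke the maximum-entropy property of the Gaussian, proved via Gibbs' inequality $D(\rho\Vert g)\geq 0$ against the Gaussian $g$ matching the first and second moments of $\rho$, and then chain this with $\sigma_q^2\sigma_p^2 = |\Sigma| + cov_{q,p}^2 \geq |\Sigma|$. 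Your route buys two things the paper's does not: (i) rigor --- the Lagrange-multiplier computation only identifies a stationary point and does not by itself establish that it is a global minimum, whereas $D(\rho\Vert g)\geq 0$ is a genuine global inequality with a clean equality characterization ($\rho = g$ a.e.), so your equality analysis (zero covariance plus Gaussianity) is actually sharper than the paper's ``the equal sign applies in the case where $\rho$ is the product of two gaussians''; and (ii) the coordinate-invariant intermediate bound $|\Sigma| \geq e^{2I_0}/(2\pi e)^2$, which ties in nicely with the role the text assigns to $|\Sigma|$ as the invariant measure of uncertainty. What the paper's approach buys in exchange is self-containedness: it derives the extremal distribution from the variational principle rather than positing the Gaussian comparison distribution in advance, which fits the paper's pedagogical aim of using only elementary techniques. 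The only caveats in your argument are implicit finiteness assumptions --- you need $\Sigma$ nonsingular for $g$ to exist and finite second moments for $\int \rho \log g$ to converge --- but in the degenerate cases ($\sigma_q\sigma_p$ infinite, or $I_0 = -\infty$) the claimed inequality holds trivially, so these are edge cases rather than gaps.
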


\begin{proof}
	We set up a minimization problem using Lagrange multipliers. We want to minimize the product of the variance $\sigma_q^2 \sigma_p^2 \equiv \int (q-\mu_q)^2 \rho_{\mathcal{c}} \, dqdp \int (p-\mu_p)^2 \rho_{\mathcal{c}} \, dqdp$ while keeping $\rho$ normalized and fixing the entropy to $I_0$. We have: 
	\begin{align*}
		L = &\int (q-\mu_q)^2 \rho_{\mathcal{c}} \, dqdp \int (p-\mu_p)^2 \rho_{\mathcal{c}} \, dqdp \\
		&+ \lambda_1(\int \rho_{\mathcal{c}} dqdp - 1) \\ &+ \lambda_2(- \int \rho_{\mathcal{c}} \ln \rho_{\mathcal{c}} \, dqdp - I_0)\\
		\delta L = &\int \delta \rho_{\mathcal{c}} [(q-\mu_q)^2 \sigma_p^2 + \sigma_q^2 (p-\mu_p)^2 + \\ &\lambda_1 - \lambda_2 \ln \rho_{\mathcal{c}} - \lambda_2 ] dqdp = 0 \\
		\lambda_2 \ln \rho_{\mathcal{c}} = &\lambda_1 - \lambda_2 + (q-\mu_q)^2 \sigma_p^2 + \sigma_q^2 (p-\mu_p)^2 \\
		\rho_{\mathcal{c}} = &e^{\frac{\lambda_1 - \lambda_2}{\lambda_2}}e^{\frac{(q-\mu_q)^2 \sigma_p^2}{\lambda_2}}e^{\frac{\sigma_q^2 (p-\mu_p)^2}{\lambda_2}}
	\end{align*}
	We solve the multipliers and have:
	\begin{align*}
		\rho_{\mathcal{c}} = &\frac{1}{ 2 \pi \sigma_q \sigma_p} e^{-\frac{(q-\mu_q)^2}{2\sigma_q^2}} e^{-\frac{(p-\mu_p)^2}{2\sigma_p^2}} \\
		I_0 = &\ln (2\pi e\sigma_q\sigma_p) \\
		\sigma_q \sigma_p &= \frac{e^{I_0}}{2 \pi e}
	\end{align*}
	This shows that the gaussian minimizes the spread at fixed entropy, therefore all other distributions must have a larger or equal spread. 
\end{proof}

\begin{remark}
	Note that the inverse argument does not work: if we have a bound on the uncertainty, we cannot say anything about the entropy. All distributions with higher entropy will satisfy a higher bound, therefore the entropy is arbitrarily high. Moreover, we can find distributions with low entropy but with arbitrarily high spread. Therefore a bound on the uncertainty still allows any value for entropy.
\end{remark}

\bibliography{bibliography}

\end{document}